\newtheorem{corollary}{Corollary}
\def\tr{\operatorname{tr}}
\newtheorem{theorem}{Theorem}
\DeclareOldFontCommand{\rm}{\normalfont\rmfamily}{\mathrm}
\DeclareOldFontCommand{\rm}{\normalfont\rmfamily}{\mathrm}
\begin{document}

\title{Quantum Speed limit on the production of quantumness of observables}

\author{Divyansh Shrimali}\email{divyanshshrimali@hri.res.in}
\affiliation{Harish-Chandra Research Institute,\\  A CI of Homi Bhabha National
Institute, Chhatnag Road, Jhunsi, Prayagraj 211019, India
}
\author{Swapnil Bhowmick}\email{swapnilbhowmick@hri.res.in}
\affiliation{Harish-Chandra Research Institute,\\  A CI of Homi Bhabha National
Institute, Chhatnag Road, Jhunsi, Prayagraj 211019, India}

\author{Arun Kumar Pati}\email{arun.pati@tcgcrest.org}
\affiliation{Centre for Quantum Engineering, Research and Education (CQuERE), TCG CREST, Kolkata, India}

\begin{abstract}
Non-classical features of quantum systems can degrade when subjected to environment and noise. Here, we ask a fundamental question: What is the minimum amount of time it takes for a quantum system to exhibit non-classical features in the presence of noise?
Here, we prove distinct speed limits on the quantumness of observable as the norm of the commutator of two given observables. The speed limit on such quantumness measures sets the fundamental upper bound on the rate of change of quantumness, which provides the lower bound on the time required to change the quantumness of a system by a given amount. Additionally, we have proved speed limit for the non-classical features such as quantum coherence
that captures the amount of superposition in the quantum systems.
We have demonstrated that obtained speed limits are attainable for physical processes of interest, and hence, these bounds can be considered to be tight.




\end{abstract}

\maketitle

\section{Introduction}


Quantumness is among a unique set of non classical properties of quantum mechanics that sets them apart from classical notions in physics. These properties include, but are not limited to superposition, entanglement, and non-commutativity, which play their crucial role in the behaviour of quantum systems. Intuitively, non-commutativity, coherence, and quantum correlations are all closely related concepts. Non-commutativity in particular is responsible for the evolution of quantum systems, as it determines how  the state and observables of a system change over time. In this sense, non-commutativity is more fundamental because it is an essential component in the quantum state evolution for generation of other quantum mechanical properties like superposition and entanglement. This property also gives rise to the uncertainty relation theorised first by Heisenberg~\cite{Heisenberg} and proven rigorously by Robertson~\cite{Robertson1929}, which essentially sits at the core of quantum mechanics. Several attempts have been made over the years in studying this concept and being reinterpreted like in~\cite{Pati-2007,Chen-2015, Chen-2016}, achieving better and stronger bounds as in~\cite{Pati-2014,Englert-2024, Mondal-2017}, has lead to several applications in recent years \cite{Shrimali2022, Hasegawa2020, Yu2022}. Furthermore, recent studies have demonstrated that the coherence of a system can be estimated by measuring the non-commutativity between an observable and its incoherent part (in reference basis)~\cite{Tanaya-22}.

Quantum physics imposes fundamental limitations on the rate at which the state and observables of quantum system evolve in time when quantum system is subjected to external field or environment. These limitations are known as quantum speed limits (QSLs). Quantum speed limits provide a lower bound on the time required for a change in the distinguishability of initial and time evolved quantum states or the expectation value of a given observable through a physical process. QSLs have been extensively studied for both closed system dynamics~\cite{Mandelstam1945,Margolus1998,Aharonov1961,Anandan1990}, open system dynamics~\cite{Campo2013}, non-Hermitian dynamics~\cite{Impens2021,Brody2019,Alipour2020,Uzdin2012,Dimpi2022} and many-body dynamics~\cite{Anthony-23, Zhang-23}. The stronger bounds on speed limits for states and observables~\cite{Shrimali2024} are also proposed and worked out. Due to their fundamental nature, QSLs have found applications in rapidly developing quantum technologies such as quantum control~\cite{Caneva-09, Campbell-17, Evangelakos-23}, quantum computation~\cite{Ashhab-12, Mohan-22, Aifer-22}, quantum communication~\cite{Wiesner-92}, quantum energy storage devices~\cite{Shrimali2024,Mohan2021}, and quantum thermal machines~\cite{Victor-23, Bhandari-22}.

In quantum physics, observables have an intriguing property of non-commutativity which we do not see in classical physics. 
Non-commutativity of two observables is often used to study several physical phenomena, such as the scrambling of information~\cite{Xu2024}, quantum chaos~\cite{Jensen1992}, quantum phase transitions~\cite{heyl2018,Vojta2003}, linear response theory~\cite{Vliet1979} and the generation of non-classicality(quantumness)~\cite{Jing2016}. Several witnesses of non-classicality and quantum informational measures have been defined using non-commutativity of observables such as quantumness~\cite{Jing2016}, skew information~\cite{Luo-03} and coherence~\cite{Hu-2017, Bu-2017}. We often encounter a situation where physical observables evolve and become incompatible with the initial observable (or given observable), as they do not commute.

The degree of non-commutativity between two observables is associated with how much off-diagonal part one observable has in the eigenbasis of the other observable. The non-commutativity of two observables can be quantified by the norm of their commutator. Moreover, it has been found that quantum coherence and non-commutativity are related concepts \cite{Tanaya-22}. In this work, we consider non-commutativity and quantum coherence as two distinct measures of quantumness. Given the importance of quantumness, a natural question arises: is it possible to formulate QSLs for quantumness. To address this question, we derive the quantum speed limit for the quantumness of observables and quantum coherence, where former is defined as the norm of the commutator between two given observables. The quantum speed limits on the quantumness provide a lower bounds on the timescale required to degrade a given degree of quantumness. This bound quantifies how fast an observable becomes incompatible with a stationary/initial observable or the extent to which the time-evolved observable is incompatible with the stationary observable under an arbitrary physical process. By establishing this bound, we can classify physical processes based on their ability to generate non-classicality in a minimal time, which has implications for the understanding of quantum dynamics. Furthermore, we have estimated speed limits on the quantumness of observables for an example case of markovian and pure dephasing dynamics. We found that pure dephasing dynamics saturate speed limits on the quantumness of observables. This finding provides new insights into the nature of pure dephasing dynamics and highlights its importance in generating non-classicality in the fastest possible way. Moreover, we have obtained speed limit bounds on the generation and degradation of skew information and coherence. In general we refer to the speed limits derived in this paper as speed limits on quantumness.

This paper is organised as follows. In section~\ref{sec:Prelims}, we discuss the preliminaries and background required to arrive at the main results of this paper. In section~\ref{sec:QSL_production}, we derive the speed limits on the quantumness of observables, the skew information and the coherence. In section~\ref{sec:Illustrations}, we estimates the speed limits on the quantumness for physical process of interest. Finally, in the last section~\ref{sec:Conclusion}, we provide conclusions.

\section{Preliminaries}\label{sec:Prelims}
In this section, we briefly review some of the standard notations and results common in the literature of quantum information theory.


{\it \bf Quantum dynamics:}  Let us consider a quantum system described by a density operator  $\rho \in \cal{D}(\cal{H})$ that follows an open quantum dynamics.
The evolution of the density operator in the Schr\"odinger picture can be described by a linear differential equation known as the master equation given by 

 \begin{equation}
      \dot{\rho_t}:=\diff{\rho_t}{t}=\mathcal{L}_t(\rho_t) \label{Master_equation_rho},
 \end{equation}
where $\rho_t $ is the state of the system at time $t$ and $\mathcal{L}_t$ is the Liouvillian super-operator~\cite{Rivas-12} which in general can be time independent or time-dependent.
If $\cal{L}_{t}$ is time independent, then the state at any time $t$ is given as $\rho_{t} = \operatorname{e}^{t\cal{L}}(\rho_{0})$.

Let us now see how observables evolve in the Heisenberg picture. The adjoint of a linear completely positive trace preserving (CPTP) map $\Phi$ is denoted by $\Phi^\dag:\cal{B}(\cal{H})\rightarrow\cal{B}(\cal{H})$ and is defined as the unique linear map that satisfies $\tr({\cal{A}\Phi(\rho)}) = \tr({\Phi^\dag}(\cal{A})\rho) $, $\forall \rho \in \cal{D}(\cal{H}), \cal{A} \in \cal{B}(\cal{H})$. Since $\Phi$ is trace preserving, $\Phi^{\dag}$ is unital, i.e., $\Phi(\mathbbm{1}_{d})=\mathbbm{1}_{d}$. 

The evolution of an observable $\mathcal{A}_t$ in the Heisenberg picture is given by the map ${\Phi^{\dag}}$  which keeps the density operator fixed and evolves an initial observable to a final observable. The only constraint on ${\Phi}^{\dag}$ is that it must be unital. In differential equation form the time evolution of an observable is given by adjoint-master equation:
\begin{equation}
     \dot{\cal{A}}_t:= \diff{ \cal{A}_t }{t} =\cal{L}_{t}^{\dagger}(\cal{A}_t)\label{Master_equation_observable},
\end{equation}
where $\cal{A}_t $ is the observable of the system at time $t$ and  $\cal{L}^{\dagger}_{t}$ is adjoint of the Liouvillian super-operator. If $\cal{L}_{t}$ is time independent the observable at time $t$ is given as $\cal{A}_{t} = \operatorname{e}^{t \cal{L}^{\dag}}(\cal{A}_{0})$.

{\it \bf Quantumness of Observables:}
Suppose $\mathcal{A}$ and $\mathcal{B}$ are two observables of a quantum system which is described by Hilbert space $\mathcal{H}$. We can use any suitable norm of the commutator of two observables to quantify the non-commutativity of observables \cite{Ma2014,Guo2016,Mord2022}. This is because $[\mathcal{A}, \mathcal{B}]= 0$ iff $\norm{[\mathcal{A}, \mathcal{B}]}=0 $ where $\norm{\cdot}$ represents a suitable norm defined on the operator space. 
For example, one has the Schatten-p norm of an operator $O\in \cal{B(H)}$ which is defines as
\begin{equation}
    \|O\|_{p}=(\tr |O|^{p})^{1/p},
\end{equation}
where $|O| = \sqrt{O^{\dagger}O}$,\, $p\geq 1,\, p\in \mathbb{R}$. The operator norm, the Hilbert-Schmidt norm, and the trace norm corresponds to $p=\infty,\cdots,2,1$, respectively and satisfy the inequality $\|A\|_{\rm op}\leq\|A\|_{\rm HS}\leq\|A\|_{\rm tr}$. In the sequel, we use the Hilbert-Schimidt norm to define the quantumness of observable.
Since $\norm{[\mathcal{A}, \mathcal{B}]}\neq0$  implies that $\mathcal{A}$ and $\mathcal{B}$ do not commute, the quantity $\norm{[\mathcal{A}, \mathcal{B}]}$ is used to define the quantumness of  $\mathcal{A}$ and $\mathcal{B}$ as follows
 \begin{equation}\label{Quantumness}
     Q(\mathcal{A},\mathcal{B})=2 \norm{[\mathcal{A}, \mathcal{B}]}_{\rm HS}^{2},
 \end{equation}
 where $\norm{O}_{\rm HS}=\sqrt{\tr(O^{\dagger}O)}$  is the Hilbert-Schmidt norm of $O$. If we choose $\mathcal{A}=\rho$ and $\mathcal{B}=\sigma$, where $\rho$ and $\sigma$ are density matrices, then $ Q(\rho,\sigma)$ reduces to the definition of quantumness for states first introduced in Refs.~\cite{Iyengar2013,Ferro2018}. It can be easily measured in the interferometric setups~\cite{Ferro2018}. In Ref.~\cite{Hu-2017}, it has been demonstrated that quantumness establishes the lower bound for the relative quantum coherence.
The quantumness of observables is related to the out of time ordered correlators (OTOC) \cite{Xu2024}, which is used in studying quantum chaos and scrambling in many-body physics. It has been established that the observable $\mathcal{O}=i[\mathcal{A},\mathcal{A}^{D}]$ serves as coherence witness, where $\mathcal{A}$ is unit-norm traceless observable and $\mathcal{A}^{D}$ is dephased observable of $\mathcal{A}$ in reference basis. The absolute value of 
the expectation value of observable $\mathcal{O}$ in a given pure state $\ket{\psi}$ is lower bounded by twice of $l$-1 norm of coherence of $\ket{\psi}$ in reference basis. Moreover, we can show that for qubit observables the quantity $\norm{[\mathcal{A}, \mathcal{B}]}_{\rm HS}$ is proportional to the $l_1$ norm of coherence of $\mathcal{B}'$ in basis of $\mathcal{A}$, where $\mathcal{B}'=\mathcal{B} + \lambda I$ and $-\lambda$ is smallest negative eigenvalue of $\mathcal{B}$.

{\it \bf Skew Information:} In the definition of quantumness~\eqref{Quantumness}, if we choose $\mathcal{A}=\sqrt{\rho}$ and $\mathcal{B}=A$, and multiply factor 1/4 then we obtain 
 \begin{equation}
     I(\rho,A)=\frac{1}{2} \norm{[\sqrt{\rho}, A]}_{\rm HS}^{2},
 \end{equation}
where $\rho$ and $\cal{A}$ are state and observable of the given quantum system, respectively.
This quantity is known as the Dyson-Wigner-Yanase skew information \cite{Luo-03}. It has many interpretation in quantum information theory. Skew information lower bounds the quantum uncertainty of the observable $\cal{A}$ in the quantum state $\rho$~\cite{Luo2006} and quantum fisher information~\cite{Luo2004}. Skew information has been used to construct measures of quantum correlations~\cite{Luo2012}, quantum coherence~\cite{Luo2022,Yu2017} and so on. It has also been widely used to study uncertainty relations~\cite{Luo-03}, quantum phase transitions~\cite{Li2016}, quantum speed limits~\cite{Marvian-2016, Pires-2016}, etc.
Skew information has quite different from, however it is deeply related to, the celebrated von Neumann entropy~\cite{Wei-2020}.

{\it \bf Quantum Coherence:} Quantum superposition is one of the remarkable features of the quantum theory and the amount of superposition present in a particular state is quantified by quantum coherence. Quantum coherence plays a pivotal role in numerous quantum information processing tasks~\cite{Streltsov2017}, quantum computation~\cite{Ahnefeld2022}, and quantum thermodynamics~\cite{Hammam2022,Gour2022}. In context of quantum thermodynamics, the energy eigenbasis becomes a natural reference basis of choice~\cite{Lostaglio2019}. In general, quantum coherence is a basis dependent quantity. There are several widely known quantum coherence measures such as the relative entropy of coherence and the $l_1$ norm of coherence~\cite{Baumgratz-14}, skew-information based coherence measure~\cite{Luo2022}, the geometric coherence~\cite{Xiong2018}, and the robustness of coherence~\cite{Napoli2016}, etc. We are using the skew-information based coherence measure because of its relation to non-commutativity of observables. In addition, it is also easier to work and compute compared to some other measures of coherence.

The skew-information based quantum coherence of a state $\rho$ in the basis of given observable $\cal{A}$, i.e., $\{\ket{k}\}$ can be quantified by \cite{Yu2017, Girolami-2014}
\begin{equation}
    C(\rho, \cal{A}) = \sum_{k=0}^{N_{D}-1}I(\rho, \ket{k}\bra{k}),
\end{equation}
where $I(\rho,\ket{k}\bra{k})=-\frac{1}{2}\Tr{[\sqrt{\rho},\ket{k}\bra{k}]}^{2}$ represents the skew information with respect to the projector $\ket{k}\bra{k}$. The 
above measure of quantum coherence is a strongly monotonic one.

\section{QSL on production of quantumness of Observables, skew information and coherence}\label{sec:QSL_production}
In this section we will present some distinct speed limits which will be based on change of quantumness of observables, skew information and coherence. 

In quantum physics, we often encounter situations where observables or states evolve in time and at any later time they fail to commute with initial observables or states. This is the fundamental feature of quantum dynamics. To this end, the natural question arises "how fast time evolved observables fail to commute with initial observables or what is the timescale when they become incompatible to each other?" To answer this question we have derive speed limit on the quantumness of observables.




\begin{theorem}\label{Th1}
For any given observable ($\mathcal{A}$) of a quantum system whose time evolution is governed by the adjoint of the Liouvillian super-operator $\mathcal{L}^{\dagger}_t$, the time required to generate a certain amount of quantumness $Q(\mathcal{A}_0, \mathcal{A}_T)$ is bounded below by:
\begin{equation}\label{QQSL}
    T \geq T_{Q} = \frac{\sqrt{Q(\mathcal{A}_0, \mathcal{A}_T)}}{\sqrt{2}\langle\!\langle \lVert [\mathcal{A}_0, \mathcal{L}^{\dagger}(\mathcal{A}_t)] \rVert_{\rm HS} \rangle\!\rangle}_{\rm T},
\end{equation}
where $\mathcal{A}_0$ and $\mathcal{A}_T$ represent the initial and final observables of the given quantum system, respectively. Additionally, $\langle\!\langle X_t \rangle\!\rangle_{T} \equiv \frac{1}{T} \int_{0}^{T} X_t \,dt$ denotes the time average of the quantity $X_t$.
\end{theorem}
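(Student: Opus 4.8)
The plan is to track the evolution of the Hilbert–Schmidt norm of the commutator $C_t := [\mathcal{A}_0,\mathcal{A}_t]$. By the definition~\eqref{Quantumness} we have $Q(\mathcal{A}_0,\mathcal{A}_t)=2\lVert C_t\rVert_{\rm HS}^2$, and at the initial time $C_0=[\mathcal{A}_0,\mathcal{A}_0]=0$. Since $\mathcal{A}_0$ does not depend on time and the commutator is bilinear, the adjoint master equation~\eqref{Master_equation_observable} gives $\dot C_t=[\mathcal{A}_0,\dot{\mathcal{A}}_t]=[\mathcal{A}_0,\mathcal{L}_t^\dagger(\mathcal{A}_t)]$.

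Next I would estimate how fast the norm can grow. Writing $\lVert C_t\rVert_{\rm HS}=\sqrt{\tr(C_t^\dagger C_t)}$ and differentiating,
\begin{equation}
\frac{d}{dt}\lVert C_t\rVert_{\rm HS}=\frac{\Re\,\tr(C_t^\dagger\dot C_t)}{\lVert C_t\rVert_{\rm HS}},
\end{equation}
and then the Cauchy–Schwarz inequality for the Hilbert–Schmidt inner product, $|\tr(C_t^\dagger\dot C_t)|\le\lVert C_t\rVert_{\rm HS}\lVert\dot C_t\rVert_{\rm HS}$, yields the pointwise speed bound
\begin{equation}
\left|\frac{d}{dt}\lVert C_t\rVert_{\rm HS}\right|\le\lVert\dot C_t\rVert_{\rm HS}=\bigl\lVert[\mathcal{A}_0,\mathcal{L}_t^\dagger(\mathcal{A}_t)]\bigr\rVert_{\rm HS}.
\end{equation}
Integrating from $0$ to $T$ and using $C_0=0$ then gives $\lVert C_T\rVert_{\rm HS}\le\int_0^T\lVert[\mathcal{A}_0,\mathcal{L}_t^\dagger(\mathcal{A}_t)]\rVert_{\rm HS}\,dt=T\langle\!\langle\lVert[\mathcal{A}_0,\mathcal{L}^\dagger(\mathcal{A}_t)]\rVert_{\rm HS}\rangle\!\rangle_T$. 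Finally, substituting $\lVert C_T\rVert_{\rm HS}=\sqrt{Q(\mathcal{A}_0,\mathcal{A}_T)}/\sqrt 2$ and rearranging produces~\eqref{QQSL}. Equivalently, and perhaps more cleanly, one can avoid differentiating the norm altogether by writing $C_T=C_T-C_0=\int_0^T\dot C_t\,dt$ and applying the integral (Minkowski) form of the triangle inequality $\lVert\int_0^T\dot C_t\,dt\rVert_{\rm HS}\le\int_0^T\lVert\dot C_t\rVert_{\rm HS}\,dt$, reaching the same inequality directly.

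I expect the only real obstacle to be a minor regularity point rather than a conceptual one: the expression for $\tfrac{d}{dt}\lVert C_t\rVert_{\rm HS}$ is singular precisely where $C_t=0$ — in particular at $t=0$ — so the differentiation/integration step must be justified there. This is handled by noting that $t\mapsto\lVert C_t\rVert_{\rm HS}$ is Lipschitz (hence absolutely continuous, with the derivative existing and the bound valid almost everywhere), or simply by adopting the direct integral-inequality argument above, which never divides by the norm; one should also record that $t\mapsto\mathcal{A}_t$ is differentiable by virtue of~\eqref{Master_equation_observable}, so the fundamental theorem of calculus applies. For the attainability/tightness claims one would additionally need to identify when Cauchy–Schwarz (and the triangle inequality) are simultaneously saturated — e.g. when $\dot C_t$ stays proportional to $C_t$ along the evolution — which the authors presumably verify for the pure dephasing example in Section~\ref{sec:Illustrations}.
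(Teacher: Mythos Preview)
Your proof is correct and follows essentially the same route as the paper: differentiate the (squared) Hilbert--Schmidt norm of $C_t=[\mathcal{A}_0,\mathcal{A}_t]$, bound the derivative via Cauchy--Schwarz, and integrate from $0$ to $T$ using $C_0=0$. The paper works with $Q=2\lVert C_t\rVert_{\rm HS}^2$, applies the triangle inequality plus Cauchy--Schwarz to get $|\dot Q|\le 2\sqrt{2}\,\lVert[\mathcal{A}_0,\mathcal{L}^\dagger(\mathcal{A}_t)]\rVert_{\rm HS}\sqrt{Q}$, and then divides by $\sqrt{Q}$ before integrating; your version, which tracks $\lVert C_t\rVert_{\rm HS}$ directly (or uses the Minkowski integral inequality), is a slightly tidier repackaging of the same argument and sidesteps the formal division-by-zero at $t=0$ that you correctly flagged.
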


\begin{proof}
   If we choose $\mathcal{A}_{X}=\mathcal{A}_{0}$ and $\mathcal{A}_{Y}=\mathcal{A}_{t}$ in~\eqref{Quantumness} such that $[\mathcal{A}_0, \mathcal{A}_t]\neq0$ (where $\mathcal{A}_{t}=e^{\mathcal{L}^{\dag}_t}(\mathcal{A}_{0})$) . The quantumness $Q(\mathcal{A}_0,\mathcal{A}_t)$ at time $t$ can be expressed as
\begin{equation}
Q(\mathcal{A}_0,\mathcal{A}_t) = 2\tr([\mathcal{A}_0,\mathcal{A}_t]^{\dagger}[\mathcal{A}_0,\mathcal{A}_t]).
\end{equation}
After differentiating the above equation with respect to time $t$, we obtain

\begin{align}
    \frac{{\rm d} }{{\rm d}t}Q(\mathcal{A}_0,\mathcal{A}_t) =\hspace{0.2cm} & 2\tr([\mathcal{A}_0,\dot{\mathcal{A}}_t]^{\dagger}[\mathcal{A}_0,\mathcal{A}_t])+\nonumber \\ & 2\tr([\mathcal{A}_0,\mathcal{A}_t]^{\dagger}[\mathcal{A}_0,\dot{\mathcal{A}_t}]).
\end{align}

Let us now consider the absolute value of the above equation and apply triangular inequality $|A+B|\leq |A|+|B|$. We then obtain the following inequality

\begin{align}
     \left|\frac{{\rm d} }{{\rm d}t}Q(\cal{A}_0,\cal{A}_t)\right| \leq\hspace{0.2cm}& 2\left|\tr([\cal{A}_0,\dot{\cal{A}}_t]^{\dagger}[\cal{A}_0,\cal{A}_t])\right|\nonumber \\&+2\left|\tr([\cal{A}_0,\cal{A}_t]^{\dagger}[\cal{A}_0,\dot{\cal{A}_t}])\right|.
\end{align}
Let us apply the Cauchy--Schwarz inequality on the right hand side of the above inequality, we get 
\begin{equation}
     \left|\frac{{\rm d} }{{\rm d}t}Q(\cal{A}_0,\cal{A}_t)\right| \leq 2\sqrt{2}\norm{[\cal{A}_0,\cal{L}^{\dagger}({\cal{A}}_t)]}_{\rm HS}\sqrt{Q(\cal{A}_0,\cal{A}_t)}.
\end{equation}
From above inequality we obtain
\begin{equation}
     \left|\int_0^{T}\frac{{{\rm d} Q(\cal{A}_0,\cal{A}_t)} }{\sqrt{Q(\cal{A}_0,\cal{A}_t)}}\right| \leq 2\sqrt{2}\int_{0}^{T}\norm{[\cal{A}_0,\cal{L}^{\dagger}({\cal{A}}_t)]}_{\rm HS}{\rm d}t.
\end{equation}

After integrating above inequality, we obtain the following bound 
\begin{equation}
  T \geq T_{Q} = \frac{\sqrt{Q(\mathcal{A}_0, \mathcal{A}_T)}}{\sqrt{2}\langle\!\langle \lVert [\mathcal{A}_0, \mathcal{L}^{\dagger}(\mathcal{A}_t)] \rVert_{\rm HS} \rangle\!\rangle}_{\rm T},
\end{equation}
\end{proof}

The above theorem provides an interesting inequality between generation of quantumness and evolution time. The saturation of above bound can depend on both observable and type of dynamics. For given observable the dynamics will saturate above bound will take minimum amount of time to generate certain amount of quantumness.

In above theorem if we replace observable by state and if CPTP dynamical map is self adjoint then we can recover the speed limits on quantumness of states which is derived in Refs~\cite{Jing2016}. Therefore speed limits on quantumness of states can be considered as special case of speed limits on quantumness of observables.

\begin{corollary}
For a given observable $\mathcal{A}_0$ of a quantum system whose time evolution is governed by the adjoint of the Liouvillian super-operator $\mathcal{L}^{\dagger}_t$, the time required to generate a certain amount of quantumness with respect to a reference observable $\mathcal{B}_0$, which is expressed by $Q(\mathcal{B}_0, \mathcal{A}_T)$ is lower bounded by
\begin{equation}
  T \geq T_{Q} = \frac{\left|\sqrt{Q(\mathcal{B}_0, \mathcal{A}_0)}-\sqrt{Q(\mathcal{B}_{0} \mathcal{A}_T)}\right|}{\sqrt{2}\langle\!\langle \lVert [\mathcal{B}_0, \mathcal{L}^{\dagger}(\mathcal{A}_t)] \rVert_{\rm HS} \rangle\!\rangle_{\rm T}},
\end{equation}
where $\mathcal{A}_0$ and $\mathcal{A}_T$ represent the initial and final observables of the given quantum system, respectively. Additionally, $\langle\!\langle X_t \rangle\!\rangle_{T} \equiv \frac{1}{T} \int_{0}^{T} X_t \,dt$ denotes the time average of the quantity $X_t$.
\end{corollary}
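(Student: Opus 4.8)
The plan is to repeat the argument of Theorem~\ref{Th1} almost verbatim, but with the fixed reference observable $\mathcal{B}_0$ in place of $\mathcal{A}_0$ in the commutator, and to keep track of the initial value $Q(\mathcal{B}_0,\mathcal{A}_0)$, which is no longer assumed to vanish. First I would write the time-dependent quantumness as
\begin{equation}
Q(\mathcal{B}_0,\mathcal{A}_t)=2\tr\!\left([\mathcal{B}_0,\mathcal{A}_t]^{\dagger}[\mathcal{B}_0,\mathcal{A}_t]\right),
\end{equation}
with $\mathcal{A}_t$ evolving under the adjoint master equation $\dot{\mathcal{A}}_t=\mathcal{L}^{\dagger}_t(\mathcal{A}_t)$. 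Differentiating in $t$, using linearity of the commutator in its second slot so that $\frac{d}{dt}[\mathcal{B}_0,\mathcal{A}_t]=[\mathcal{B}_0,\dot{\mathcal{A}}_t]=[\mathcal{B}_0,\mathcal{L}^{\dagger}(\mathcal{A}_t)]$, gives two trace terms exactly as in the theorem.

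Next I would take absolute values, apply the triangle inequality to split the two terms, and then the Cauchy--Schwarz inequality $|\tr(X^{\dagger}Y)|\le\|X\|_{\rm HS}\|Y\|_{\rm HS}$ with $X=[\mathcal{B}_0,\mathcal{L}^{\dagger}(\mathcal{A}_t)]$ and $Y=[\mathcal{B}_0,\mathcal{A}_t]$. Recognising $\|[\mathcal{B}_0,\mathcal{A}_t]\|_{\rm HS}=\sqrt{Q(\mathcal{B}_0,\mathcal{A}_t)/2}$, this yields
\begin{equation}
\left|\frac{d}{dt}Q(\mathcal{B}_0,\mathcal{A}_t)\right|\le 2\sqrt{2}\,\bigl\|[\mathcal{B}_0,\mathcal{L}^{\dagger}(\mathcal{A}_t)]\bigr\|_{\rm HS}\sqrt{Q(\mathcal{B}_0,\mathcal{A}_t)}.
\end{equation}
Dividing through by $2\sqrt{Q(\mathcal{B}_0,\mathcal{A}_t)}$ and noting that $\frac{d}{dt}\sqrt{Q}=\frac{1}{2\sqrt{Q}}\frac{dQ}{dt}$, I get the clean pointwise bound $\bigl|\frac{d}{dt}\sqrt{Q(\mathcal{B}_0,\mathcal{A}_t)}\bigr|\le\sqrt{2}\,\|[\mathcal{B}_0,\mathcal{L}^{\dagger}(\mathcal{A}_t)]\|_{\rm HS}$.

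Finally I would integrate this inequality over $[0,T]$. By the fundamental theorem of calculus $\sqrt{Q(\mathcal{B}_0,\mathcal{A}_T)}-\sqrt{Q(\mathcal{B}_0,\mathcal{A}_0)}=\int_0^T\frac{d}{dt}\sqrt{Q(\mathcal{B}_0,\mathcal{A}_t)}\,dt$, so taking absolute values and moving them inside the integral gives
\begin{equation}
\left|\sqrt{Q(\mathcal{B}_0,\mathcal{A}_0)}-\sqrt{Q(\mathcal{B}_0,\mathcal{A}_T)}\right|\le\sqrt{2}\,T\,\langle\!\langle\|[\mathcal{B}_0,\mathcal{L}^{\dagger}(\mathcal{A}_t)]\|_{\rm HS}\rangle\!\rangle_{\rm T},
\end{equation}
which rearranges to the claimed bound on $T$. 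The only place needing a little care is the step $\bigl|\int_0^T\frac{d}{dt}\sqrt{Q}\,dt\bigr|\le\int_0^T\bigl|\frac{d}{dt}\sqrt{Q}\bigr|\,dt$ together with the differentiability of $t\mapsto\sqrt{Q(\mathcal{B}_0,\mathcal{A}_t)}$: if $Q$ were to vanish at isolated points the square root is still continuous and the estimate goes through by a standard limiting/absolute-continuity argument, so I expect this to be the only (minor) technical obstacle; the rest is a direct transcription of the Theorem~\ref{Th1} computation with a nonzero boundary term retained.
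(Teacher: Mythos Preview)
Your proposal is correct and follows exactly the route indicated in the paper: the paper simply states that the proof is obtained by replacing $\mathcal{A}_0$ with $\mathcal{B}_0$ in the proof of Theorem~\ref{Th1}, and you have faithfully carried out that substitution while retaining the (now nonzero) initial boundary term $\sqrt{Q(\mathcal{B}_0,\mathcal{A}_0)}$. Your additional remark about the differentiability of $\sqrt{Q}$ at possible zeros is a welcome bit of care that the paper itself does not spell out.
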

The proof can be done by replacing $\mathcal{A}_0$ with $\mathcal{B}_0$ in the proof of Theorem \ref{Th1}.\\

We now derive a speed limit on the Skew information which is an informational measure of quantumness. The detailed proof for the following result is given in Appendix~\ref{Appendix:proof_cor_2}.

\begin{corollary}\label{Th2}
For a d-dimensional quantum system which is described by $\rho$, the minimum time required for the quantum system to attain skew information $\cal{I}(\rho,\cal{A}_T)$ from initial skew information $\cal{I}(\rho,\cal{A}_0)$ through arbitrary dynamics is lower bounded by

\begin{equation}
T\geq T_{Q}= \frac{\sqrt{2}|\sqrt{I(\rho,\cal{A}_T)}-\sqrt{I(\rho,\cal{A}_0)}|}{\langle\!\langle \lVert [\sqrt{\mathcal{\rho}}, \mathcal{L}^{\dagger}(\mathcal{A}_t)] \rVert_{\rm HS} \rangle\!\rangle_{\rm T}},
\end{equation}
where $\mathcal{A}_0$ and $\mathcal{A}_T$ represent the initial and final observables of the given quantum system, respectively. Additionally, $\langle\!\langle X_t \rangle\!\rangle_{T} \equiv \frac{1}{T} \int_{0}^{T} X_t \,dt$ denotes the time average of the quantity $X_t$.
\end{corollary}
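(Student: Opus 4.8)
The plan is to run the argument of Theorem~\ref{Th1} with the fixed operator $\mathcal{A}_0$ replaced by $\sqrt{\rho}$ (which, being positive and Hermitian, legitimately plays the role of one of the ``observables'' entering the quantumness functional), keeping track of the numerical factor $1/4$ that relates the skew information to the quantumness. Since $I(\rho,\mathcal{A}) = \tfrac14\,Q(\sqrt{\rho},\mathcal{A})$, the quickest route is simply to apply the preceding corollary with the reference observable $\mathcal{B}_0=\sqrt{\rho}$: this gives $T \ge |\sqrt{Q(\sqrt{\rho},\mathcal{A}_0)}-\sqrt{Q(\sqrt{\rho},\mathcal{A}_T)}|\big/\big(\sqrt{2}\,\langle\!\langle \lVert[\sqrt{\rho},\mathcal{L}^{\dagger}(\mathcal{A}_t)]\rVert_{\rm HS}\rangle\!\rangle_T\big)$, and substituting $\sqrt{Q(\sqrt{\rho},\mathcal{A})}=2\sqrt{I(\rho,\mathcal{A})}$ in the numerator reproduces exactly the claimed bound.

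For a self-contained derivation I would instead start from $I(\rho,\mathcal{A}_t)=\tfrac12\tr\!\big([\sqrt{\rho},\mathcal{A}_t]^{\dagger}[\sqrt{\rho},\mathcal{A}_t]\big)$, noting that in this Heisenberg-type picture the state, hence $\sqrt{\rho}$, is held fixed while the observable obeys $\dot{\mathcal{A}}_t=\mathcal{L}^{\dagger}_t(\mathcal{A}_t)$. Differentiating in $t$ and using linearity of the commutator in its second argument yields $\tfrac12\tr([\sqrt{\rho},\dot{\mathcal{A}}_t]^{\dagger}[\sqrt{\rho},\mathcal{A}_t])$ plus its Hermitian conjugate. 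Taking the modulus, applying the triangle inequality, and then Cauchy--Schwarz $|\tr(X^{\dagger}Y)|\le\lVert X\rVert_{\rm HS}\lVert Y\rVert_{\rm HS}$ to each term with $X=[\sqrt{\rho},\mathcal{L}^{\dagger}(\mathcal{A}_t)]$ and $Y=[\sqrt{\rho},\mathcal{A}_t]$, together with the identity $\lVert[\sqrt{\rho},\mathcal{A}_t]\rVert_{\rm HS}=\sqrt{2\,I(\rho,\mathcal{A}_t)}$, gives
\begin{equation}
\Big|\frac{{\rm d}}{{\rm d}t}I(\rho,\mathcal{A}_t)\Big|\le\sqrt{2}\,\lVert[\sqrt{\rho},\mathcal{L}^{\dagger}(\mathcal{A}_t)]\rVert_{\rm HS}\,\sqrt{I(\rho,\mathcal{A}_t)} .
\end{equation}
Dividing by $\sqrt{I(\rho,\mathcal{A}_t)}$, integrating over $[0,T]$, and using $\int {\rm d}I/\sqrt{I}=2\sqrt{I}$, the left-hand side becomes $2\,|\sqrt{I(\rho,\mathcal{A}_T)}-\sqrt{I(\rho,\mathcal{A}_0)}|$ while the right-hand side becomes $\sqrt{2}\,T\,\langle\!\langle\lVert[\sqrt{\rho},\mathcal{L}^{\dagger}(\mathcal{A}_t)]\rVert_{\rm HS}\rangle\!\rangle_T$; solving for $T$ finishes the argument.

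I expect the only delicate point to be the possible vanishing of $I(\rho,\mathcal{A}_t)$ at isolated instants along the trajectory, which makes the integrand $1/\sqrt{I}$ singular. This is harmless: the primitive $2\sqrt{I(\rho,\mathcal{A}_t)}$ is continuous, so one can equivalently bound $\big|\tfrac{{\rm d}}{{\rm d}t}\sqrt{I(\rho,\mathcal{A}_t)}\big|\le\tfrac{1}{\sqrt{2}}\lVert[\sqrt{\rho},\mathcal{L}^{\dagger}(\mathcal{A}_t)]\rVert_{\rm HS}$ directly and integrate that, or restrict to subintervals where $I>0$ and pass to the limit. Apart from this regularity bookkeeping, the proof is a line-for-line transcription of that of Theorem~\ref{Th1}.
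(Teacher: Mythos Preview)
Your proposal is correct and follows essentially the same route as the paper: the appendix proof starts from $I(\rho,\mathcal{A}_t)=\tfrac{1}{2}\tr([\sqrt{\rho},\mathcal{A}_t]^{\dagger}[\sqrt{\rho},\mathcal{A}_t])$, differentiates, applies the triangle and Cauchy--Schwarz inequalities to reach exactly your differential inequality, and then integrates---and the paper also remarks, as you do, that this amounts to replacing $\mathcal{A}_0\to\sqrt{\rho}$ in the proof of Theorem~\ref{Th1}. Your additional comment on the harmless singularity when $I(\rho,\mathcal{A}_t)$ vanishes is a nice point of rigor that the paper does not address.
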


This can be proved by replacing $\mathcal{A}_0 \rightarrow \sqrt{\rho}$ in the proof of Theorem \ref{Th1}.

Let us now see how this result can be applied to infer a speed limit on quantum coherence. Recently in \cite{Yu2017} it was shown that Skew information of a state $\rho$ can be used to define its coherence with respect to an eigenbasis $\{\ketbra{k}\}_k$ of the observable $\mathcal{A}$. Quantum coherence is a useful resource in quantum information theory, quantum computation, quantum thermodynamics, and the developing quantum technologies. Due to the presence of unwanted environmental interactions and noise (such as decoherence and dissipation), quantum systems lose their coherence, affecting the performance of quantum devices. Thus, one of the most significant challenges in quantum technologies is maintaining coherence in quantum states for longer periods. Another perspective is understanding the physical process under which a certain amount of coherence can be created in the quantum system while expending minimal amount of time. Consequently, determining the minimum timescale over which a certain amount of quantum coherence is generated or degraded in a physical process is important for quantum technologies. To address this question, we establish speed limits on quantum coherence. The speed limits on quantum coherence represent upper bounds on the instantaneous rate of change of quantum coherence and provide the minimum timescale required to bring about specific changes in quantum coherence under arbitrary dynamics. Since coherence is defined with respect to states, we will use the Schrodinger picture in the following.

\begin{theorem}\label{Th3}
For a d-dimensional quantum system, the minimum time required for the quantum system to attain coherence $\cal{C}(\rho_T,\cal{A})$ from initial coherence $\cal{C}(\rho_0,\cal{A})$ through arbitrary dynamics is lower bounded by

\begin{equation}
T\geq T_{\mathcal{C}}= \frac{\sqrt{2}|\sqrt{\cal{C}(\rho_0,\cal{A})}-\sqrt{\cal{C}(\rho_T,\cal{A})}|}{\langle\!\langle\sqrt{\sum_{k}{ \norm{[\partial_t\sqrt{\rho_t},{\ketbra{k}}]}^2_{\rm HS}}} \rangle\!\rangle_{\rm T}}.
\end{equation}

\noindent where $\rho_0$ and $\rho_T$ represent the initial and final states of the given quantum system, respectively. Here, coherence of given quantum system is measured in the basis of an arbitrary observable $\cal{A}$, i.e. $\{\ketbra{k}\}_{0}^{d-1}$. Additionally, $\langle\!\langle X_t \rangle\!\rangle_{T} \equiv \frac{1}{T} \int_{0}^{T} X_t \,dt$, denotes the time average of the quantity $X_t$.
\end{theorem}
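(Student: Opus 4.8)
The plan is to reproduce, term by term, the argument of Theorem~\ref{Th1} for each projector $\ketbra{k}$ appearing in the skew-information decomposition of the coherence, and then to reassemble the pieces with one extra Cauchy--Schwarz step over the basis index. First I would write the coherence as
\begin{equation}
\mathcal{C}(\rho_t,\mathcal{A}) = \sum_{k} I(\rho_t,\ketbra{k}) = \frac{1}{2}\sum_{k}\tr\!\big([\sqrt{\rho_t},\ketbra{k}]^{\dagger}[\sqrt{\rho_t},\ketbra{k}]\big),
\end{equation}
so that the only time-dependent object inside is $\sqrt{\rho_t}$, whose derivative I denote $\partial_t\sqrt{\rho_t}$. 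Differentiating, taking the modulus, applying the triangle inequality across the two resulting (mutually conjugate) traces and across the sum over $k$, and then the Cauchy--Schwarz inequality $|\tr(X^{\dagger}Y)|\le\|X\|_{\rm HS}\|Y\|_{\rm HS}$ on each summand, yields
\begin{equation}
\Big|\tfrac{d}{dt}\mathcal{C}(\rho_t,\mathcal{A})\Big| \le \sum_{k}\norm{[\partial_t\sqrt{\rho_t},\ketbra{k}]}_{\rm HS}\,\norm{[\sqrt{\rho_t},\ketbra{k}]}_{\rm HS}.
\end{equation}

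The new ingredient relative to Theorem~\ref{Th1} is a second Cauchy--Schwarz, now applied to the two sequences indexed by $k$: the right-hand side is at most $\big(\sum_k\norm{[\partial_t\sqrt{\rho_t},\ketbra{k}]}^2_{\rm HS}\big)^{1/2}\big(\sum_k\norm{[\sqrt{\rho_t},\ketbra{k}]}^2_{\rm HS}\big)^{1/2}$, and the last factor equals $\sqrt{2\,\mathcal{C}(\rho_t,\mathcal{A})}$ by the very definition of the coherence measure. Hence $|\tfrac{d}{dt}\sqrt{\mathcal{C}(\rho_t,\mathcal{A})}|\le\tfrac{1}{\sqrt{2}}\big(\sum_k\norm{[\partial_t\sqrt{\rho_t},\ketbra{k}]}^2_{\rm HS}\big)^{1/2}$. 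Integrating over $[0,T]$, using $\big|\int_0^T(\cdot)\big|\le\int_0^T|(\cdot)|$ on the left and writing the integral of the right-hand side as $T$ times the time average $\langle\!\langle\cdot\rangle\!\rangle_T$, I obtain $\sqrt{2}\,\big|\sqrt{\mathcal{C}(\rho_0,\mathcal{A})}-\sqrt{\mathcal{C}(\rho_T,\mathcal{A})}\big|\le T\,\langle\!\langle\sqrt{\sum_{k}\norm{[\partial_t\sqrt{\rho_t},\ketbra{k}]}^2_{\rm HS}}\rangle\!\rangle_T$, which rearranges to the stated lower bound on $T$.

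I expect the one point requiring genuine care, rather than a new idea, to be the object $\partial_t\sqrt{\rho_t}$: along a generic trajectory $\sqrt{\rho_t}$ is differentiable and its derivative $G_t$ is the unique solution of the Lyapunov equation $G_t\sqrt{\rho_t}+\sqrt{\rho_t}\,G_t=\mathcal{L}_t(\rho_t)$, but differentiability can fail at isolated instants where $\rho_t$ changes rank, so the cleanest statement keeps $\partial_t\sqrt{\rho_t}$ explicit, exactly as in the theorem. The only other thing to track is constant bookkeeping --- the factor $1/2$ in $I(\rho,\ketbra{k})$ against the factor $2$ from differentiating a quadratic form --- so that the $\sqrt{2}$ lands in the numerator of $T_{\mathcal{C}}$ as written.
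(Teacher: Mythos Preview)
Your proposal is correct and follows essentially the same route as the paper: write $\mathcal{C}(\rho_t,\mathcal{A})$ via the skew-information sum, differentiate, bound each summand by triangle and operator Cauchy--Schwarz, then apply the discrete Cauchy--Schwarz over the basis index $k$ to pull out $\sqrt{2\,\mathcal{C}(\rho_t,\mathcal{A})}$, and finally integrate. The paper integrates $\big|\tfrac{d\mathcal{C}}{\sqrt{\mathcal{C}}}\big|$ whereas you pass to $\big|\tfrac{d}{dt}\sqrt{\mathcal{C}}\big|$ first, but this is the same step; your additional remarks on the existence of $\partial_t\sqrt{\rho_t}$ via the Lyapunov equation and on the constant bookkeeping go slightly beyond what the paper spells out.
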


\begin{proof}
    
The coherence at time $t$ can be expressed as
\begin{equation}
\cal{C}(\rho_{t},\cal{A}) = \frac{1}{2}\sum_{k}\norm{[\sqrt{\rho_{t}}, \ketbra{k}]}_{\rm HS}^{2}.
\end{equation}


After differentiating the above equation with respect to time $t$, we obtain
\begin{align}
\frac{{\rm d} }{{\rm d}t}\cal{C}(\rho_t,\cal{A}) = &\frac{1}{2}\sum_{k}\Big(\tr\big([ \partial_{t}\sqrt{\rho_{t}},\ketbra{k}]^{\dagger}[\sqrt{\rho_t},\ketbra{k}]\big)\nonumber\\
&+\tr\big([\sqrt{\rho_{t}},\ketbra{k}]^{\dagger}[\partial_{t}\sqrt{\rho_{t}},{\ketbra{k}}]\big)\Big).
\end{align}
Let us now consider the absolute value of the above equation and apply triangular inequality $|A+B|\leq |A|+|B|$ and Cauchy--Schwarz inequality on the right hand side of the above inequality. We then obtain the following inequality
\begin{equation}
     \left|\frac{{\rm d} }{{\rm d}t}\cal{C}(\rho_t,A)\right| \leq \sum_{k}\norm{[\partial_t \sqrt{\rho_t},\ketbra{k}]}_{\rm HS}\norm{[\sqrt{\rho_t},\ketbra{k}]}_{\rm HS}.
\end{equation}

If we again apply the Cauchy-Schwarz inequality of sum ( $(\sum_{i=0}^{n} a_i b_i)^2 \leq \sum_{i=0}^{n}  a_{i}^{2} \sum_{i=0}^{n}  b_{i}^{2} $ , where $a_i^2$'s and $b_i^2$'s are real numbers) in right hand side of above inequality , we obtain

\begin{equation}\label{cohspeed}
     \left|\frac{{\rm d} }{{\rm d}t}\cal{C}(\rho_t,A)\right| \leq \sqrt{2}\sqrt{\sum_{k}\norm{[\partial_t \sqrt{\rho_t},\ketbra{k}]}^{2}_{\rm HS}}\sqrt{\cal{C}(\rho_t, A)}.
\end{equation}

From above inequality we obtain
\begin{equation}
     \left|\int_0^{T}\frac{{{\rm d} \cal{C}(\rho_t,\cal{A})} }{\sqrt{\cal{C}(\rho_t,\cal{A})}}\right| \leq \sqrt{2}\int_{0}^{T}\sqrt{\sum_{k}\norm{[\partial \sqrt{\rho_t},{\ketbra{k}}]}^2_{\rm HS}}{\rm d}t.
\end{equation}

After integrating above inequality, we obtain the following bound 
\begin{equation}\label{cohtime}
T\geq T_{\mathcal{C}}= \frac{\sqrt{2}|\sqrt{\cal{C}(\rho_0,\cal{A})}-\sqrt{\cal{C}(\rho_T,\cal{A})}|}{\langle\!\langle\sqrt{\sum_{k}{ \norm{[\partial_t\sqrt{\rho_t},{\ketbra{k}}]}^2_{\rm HS}}}\rangle\!\rangle_{\rm T}}.
\end{equation}
\end{proof}
The inequality~\eqref{cohspeed} represents the upper bound on the instantaneous rate of change of the coherence of the quantum system for arbitrary quantum dynamics. From the bound we see that this rate depends on instantaneous evolution speed of coherence and the coherence of the system in that instant. The speed limit on coherence applies to both coherence production and degradation processes. It can characterise the ability of coherence generation and degradation by arbitrary quantum dynamics in time. For completely dephasing process, the above bound~\eqref{cohtime} is interpreted as the speed limit on decoherence, which determines minimum timescale over which a coherent state decoheres with the environment. The bound~\eqref{cohtime} saturate for those dynamics which generate or degrade coherence in fastest possible way. It is important to note that the above speed limit on coherence is arguably easier to calculate than previously derived speed limits on coherence using relative entropy of coherence~\cite{Mohan2022}. Now we would like to demonstrate the performance and tightness of the discussed bounds. It is imperative to mention that there have been some earlier works on speed limit bounds on open quantum system for state evolution~\cite{Campo2013,Mirkin-2016}, here in the next section we would like to illustrate using the obtained bounds in this article for quantumness through observable evolution as well as coherence generation.

\section{Illustrative Example}\label{sec:Illustrations}
In this section, we will estimate the speed limits derived in previous section for dephasing, pure-dephasing and unitary processes and demonstrate that the lower bound is tight in these cases.

The dephasing channel is significant in quantum information because it models how environmental noise affects the coherence of quantum states by reducing the off-diagonal elements of a system’s density matrix. This loss of coherence is critical because quantum algorithms and protocols, such as quantum computation and communication, rely heavily on maintaining superposition and entanglement. 
Similalry, the pure dephasing channel is particularly interesting in quantum information because it represents a type of noise that affects the phase coherence of quantum states without altering their energy populations. Unlike other types of decoherence, which might involve energy dissipation (such as amplitude damping), pure dephasing exclusively impacts the off-diagonal elements of a quantum system’s density matrix in the computational basis. This makes it a crucial model for understanding how quantum coherence, a vital resource for quantum computation and communication, is degraded in real-world quantum systems. Since many quantum algorithms and protocols rely on superposition and entanglement, which are inherently sensitive to phase coherence, studying the effects of dephasing and pure dephasing can help in developing error-correction techniques and designing systems that can mitigate the loss of quantum information. 

\subsection{Quantumness Generation}

{\it  General dephasing process--} To begin with, we look at bound for quantumness generated under arbitrary evolution. 
Since quantumness is defined in terms of observables, we will evaluate the relevant lower bounds in Heisenberg picture as follows: 
\begin{eqnarray}\label{heismastereqn}
    \frac{{\rm d}\mathcal{A}_t}{{\rm dt}}= i [H,\mathcal{A}_t]+ \frac{\gamma}{2}(\sigma_z \mathcal{A}_t \sigma_z - \mathcal{A}_t)
\end{eqnarray}
Here, $\gamma$ is time independent if the evolution is Markovian and time dependent for non-Markovian evolution. The Markovian evolution is unitary when $\gamma=0$ and dephasing when $\gamma$ is non-zero. We have given an alternate analytical solution to the case for quantumness bound under unitary evolution in Appendix~\ref{Appendix:unitary}.

For our example, we consider the cases $\gamma=0.01$ and $\gamma=0$ (unitary) along with $\mathcal{A}_0 = \sigma_y, \ H=\sigma_x$. The non-zero coefficient $\gamma$ cannot be arbitrary but needs to be small to continue under markovian dynamics.  We compute the observables at $t$ by vectorizing $\mathcal{A}_0$ and the master Eq. \eqref{heismastereqn}, as given in the Appendix~\ref{Appendix:dephasing}.

{\textit{Case 1} ($\gamma=0.01$)}-- The observable $\mathcal{A}_t$ given by Eq. \eqref{At} for our example is

\begin{align}
    &\mathcal{A}_t =\nonumber\\
    &e^{-0.005t} \begin{bmatrix}
        -\sin{2t} & i(0.0025 \sin{2t}- \cos{2t} \\
        -i(0.0025 \sin{2t}- \cos{2t}) & \sin{2t}
    \end{bmatrix}.
\end{align}
The corresponding value of Quantumness is given by

\begin{eqnarray}
    Q(\mathcal{A}_0,\mathcal{A}_t) = 16 e^{-0.01 t}\sin^2{2t}.   
\end{eqnarray}
The integrand in the denominator of $T_{QSL}$ is:
\begin{eqnarray}
    ||[\mathcal{A}_0,\mathcal{L}^{\dag}(\mathcal{A}_t)]||_{HS} = 2\sqrt{2} e^{-0.01t}|0.005\sin{2t}- 2\cos{2t}|. \nonumber\\
\end{eqnarray}

After performing the integration numerically the resulting $T_{QSL}$ is given by the green dotted line in Fig.~\ref{qness}.
\\

{\textit{Case 2} ($\gamma=0$)}-- The detailed analytic evaluation of $T_{QSL}$ for a general observable and Hamiltonian is done in Appendix. Here, we state the results for $\mathcal{A}_0 = \sigma_y, \ H=\sigma_x$:

\begin{eqnarray}
    \mathcal{A}_t =
    \begin{bmatrix}
        -\sin{2t} & -i\cos{2t} \\
        i\cos{2t} & \sin{2t}
    \end{bmatrix} .
\end{eqnarray}

The corresponding values of quantumness is given by:

\begin{eqnarray}
     Q(\mathcal{A}_0,\mathcal{A}_t) = 16 \sin^2 {2t}.
\end{eqnarray}

The integrand in the denominator of $T_{QSL}$ is given by
\begin{eqnarray}
    ||[\mathcal{A}_0,\mathcal{L}^{\dag}(\mathcal{A}_t)]||_{HS} = 4\sqrt{2}|\cos{2t}|.
\end{eqnarray}

These functions are numerically integrated to obtain the Plots. Fig.~\ref{fig:qness} shows that the QSL bound for quantumness is indeed tight.\\

\begin{figure}[ht]
    \centering
    \includegraphics[width=8cm]{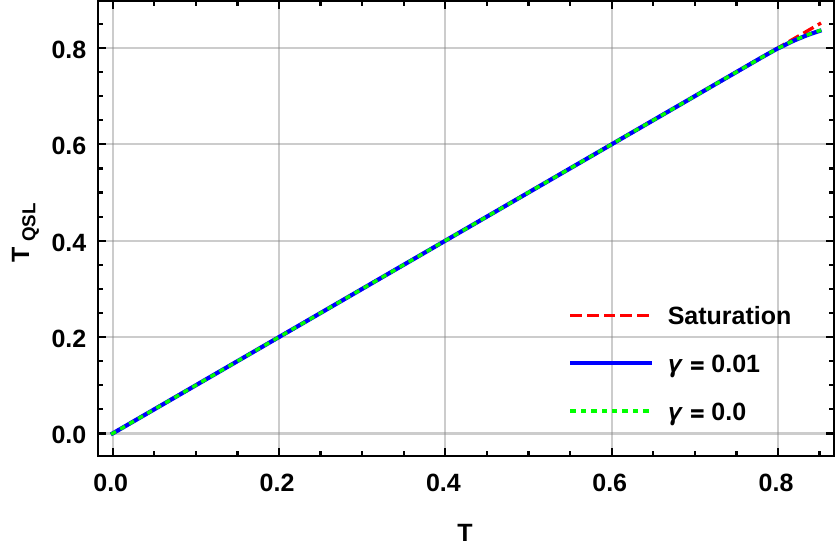}
    \caption{Here is the plot for \(T_{QSL}\) vs $T$ $\in[0,0.85]$ for generation of Quantumness under Open dynamics.}
    \label{fig:qness}
\end{figure}

{\it \bf Pure dephasing process:} We illustrate this bound for two examples. First, namely dephasing model, which is spin-boson interaction between qubit system and a bosonic reservoir. The dynamics of an observable $A_{t}$ is described by the master equation
\begin{equation}
    \mathcal{L}_{t}(A_{t})=\frac{\gamma_{t}}{2}\left(\sigma_{z}A_{t}\sigma_{z}-A_{t}\right).
\end{equation}
A general qubit observable at time $t$ has the following form as given in~\cite{Wu-20},
\begin{equation}
    A_{t}= \begin{pmatrix}
    a_3  && (a_1 -i a_2)e^{-g(t)} \\
    (a_1 + i a_2 )e^{-g(t)} && -a_3 
    \end{pmatrix},
\end{equation}
where \(g(t) = \int_{0}^{t}dt' \gamma_{t'}\) is the dephasing factor. For this example, the dephasing rate $\gamma_{t}$, has analytical form \cite{Wu-20,Chin2012} \(\gamma_{t}=\eta(1+t^{2})^{-s/2}\Gamma(s)\sin[s\arctan{t}]\),
with $\Gamma(\cdot)$ being Euler gamma function and $\eta$ a dimensionless constant. The property of environment is determined by parameter $s$ which divides the reservoir into sub-Ohmic $(s<1)$, Ohmic $(s=1)$ and super-Ohmic $(s>1)$ reservoirs.

Evaluating bound for this case requires following functions which can be algebraically obtained
\begin{align}
    Q(A_{0},A_{t}) &= 2\tr\left([A_{0},A_{t}]^{\dagger}[A_{0},A_{t}]\right) \nonumber\\
    &= 16 e^{-2 g(t)}(e^{g(t)}-1)^{2}a_{3}^{2}(a_{1}^{2}+a_{2}^{2}).
\end{align}
\begin{equation}
    \mathcal{L}^{\dag}(A_{t})= -\gamma_{t}e^{-g(t)}\begin{bmatrix}
    0  && a_1 -i a_2 \\
    a_1 + i a_2  && 0 
    \end{bmatrix}.
\end{equation}


\begin{equation}
    \|[A_{0},\mathcal{L}^{\dagger} (A_{t})]\|_{HS} = 2\sqrt{2}e^{-g(t)}|\gamma_{t}||a_{3}|\sqrt{a_{1}^{2}+a_{2}^{2}}.
\end{equation}    
Plugging these terms into Eq~\eqref{QQSL}, we obtain
\begin{equation}
    T\geq T_{Q} = \frac{1-e^{-g(T)}}{\frac{1}{T}\int_{0}^{T}dt |\gamma_{t}|e^{-g(t)}}. 
\end{equation}
As illustrated in Fig~\ref{fig:coherence_gen}, the QSl for quantum coherence is also tight.

\begin{figure}[ht]
    \centering
    \includegraphics[width=8cm]{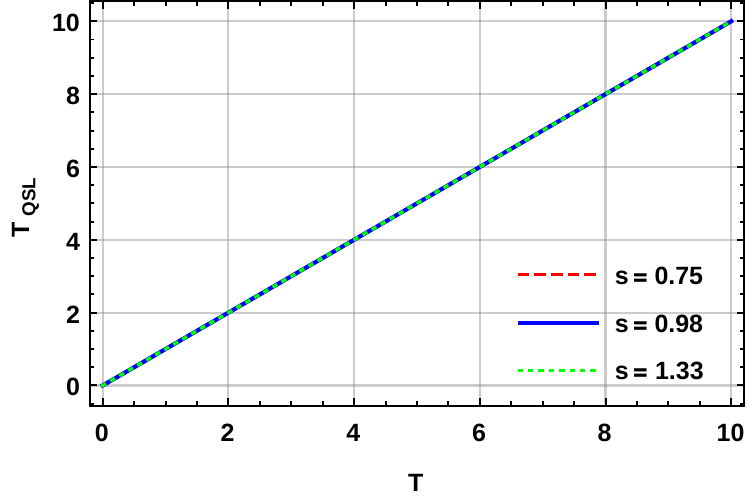}
    \caption{Here is the plot for \(T_{QSL}\) vs $T$ $\in[0,1.0]$ for generation of Quantumness under Non-Markovian dynamics with different values of $s$.}
    \label{fig:speed_limit_for_Non-Markovian}
\end{figure}

\subsection{Coherence generation}
Quantum coherence is essential in quantum computing because it enables the superposition of quantum states, a fundamental resource that allows quantum computers to perform complex calculations more efficiently than classical computers.  Many quantum algorithms rely on the coherent manipulation of qubits to achieve speedups over classical counterparts. Maintaining coherence is also critical for entanglement, another resource that enhances quantum processing power. Without coherence, quantum systems lose their advantage, making its preservation vital for the success of quantum computing.

In this case, we are working in the Schrodinger picture with $\mathcal{A}=\sigma_z$ (which does not evolve) and $\rho_0 = \ketbra{0}$ (which evolves via $H= \sigma_x$). This evolution in the Schrodinger picture is given by the Lindblad master equation:

\begin{eqnarray}\label{schrodmastereqn}
    \frac{{\rm d}\rho_t}{{\rm dt}}= -i [H,\rho_t]+ \frac{\gamma}{2}(\sigma_z \rho_t \sigma_z - \rho_t)
\end{eqnarray}

Using the vectorization techniques introduced in the Appendix, we get the following state at time $t$:

\begin{align}
    &\rho_t =\frac{\openone}{2}+ \nonumber\\    
    & e^{-0.005 t}\begin{bmatrix}
        \left( \frac{1}{2} \cos{2t} +0.0012 \sin{2t} \right) & \frac{i}{2} \sin{2t} \\
        -\frac{i}{2} \sin{2t} & \left( \frac{1}{2} \cos{2t} +0.0012 \sin{2t} \right)
    \end{bmatrix}
\end{align}

At $t=0$, $C(\rho_0, \mathcal{A})=0$ as $\rho_0$ is an eigenstate of $\mathcal{A}$. At time $t$, the coherence is given by

\begin{eqnarray}
    C(\rho_t,\mathcal{A})= |\sin{2t}|e^{-0.005t}
\end{eqnarray}

The integrand of the denominator is given by

\begin{align}
    \sum_k &||[\partial_t \sqrt{\rho_t}, \ketbra{k}]||^2 =\\
    &\frac{2 e^{-0.005t}(0.5+0.5 \cos{4t}-0.0025\sin{4t})}{|\sin{2t}|}.
\end{align}
After integrating numerically, we obtain the plots in Fig. \ref{fig:coherence_gen}. As we can see from the Fig, the bound is indeed tight.

\vspace{0.4cm}
\begin{figure}
    \centering
    \includegraphics[width=8cm]{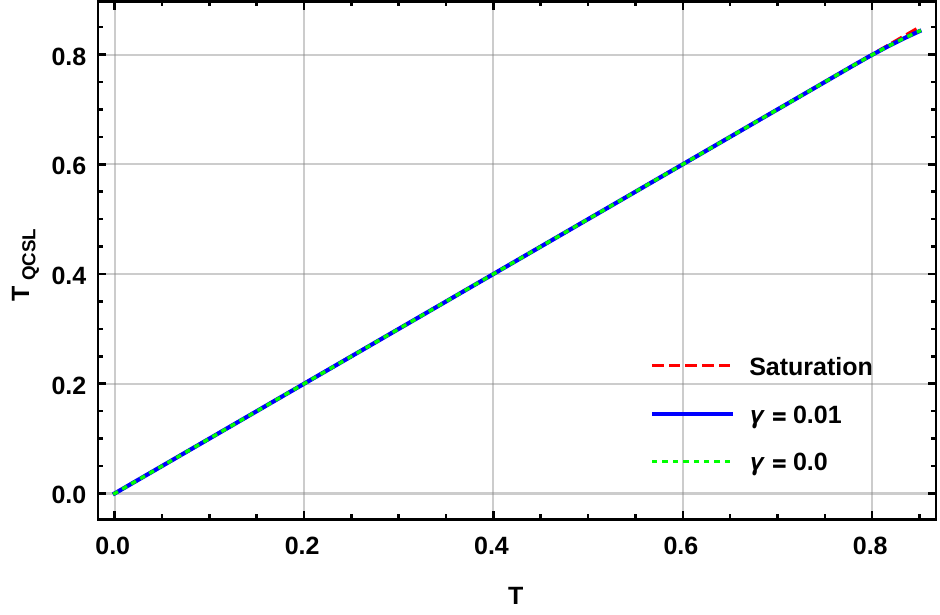}
    \caption{Here is the plot for \(T_{CQSL}\) vs $T$ $\in[0,0.85]$ for generation of Quantum Coherence under Non-Markovian dynamics with $\gamma=0.01$ and Unitary dynamics with $\gamma=0$.}
    \label{fig:coherence_gen}
\end{figure}


\section{Conclusion}\label{sec:Conclusion}
In this work, we have derived speed limits on various measures of quantumness, such as the non-commutativity of observables, skew information, and quantum coherence. These speed limits provide the minimum timescale required to generate and degrade quantumness in a physical process. The speed limits on quantumness have been computed and analyzed in a variety of relevant physical processes, including unitary evolution and pure dephasing dynamics. Moreover, we have found that the speed limits on quantumness saturate for pure dephasing dynamics. It is important to note that the speed limits derived in this paper are fundamentally different from standard speed limits, which are based on the distinguishability of the initial and final states of the given system. The speed limits on quantumness reveals the coherence generating or degrading ability of quantum dynamics in a given amount of time. 
We have illustrated QSL for quantumness and quantum coherence for dephasing as well as pure dephasing channels.
Dephasing channel plays a significant role in quantum information because it models how environmental noise affects the coherence of quantum states by reducing the off-diagonal elements of a system’s density matrix. This loss of coherence is critical because quantum algorithms and protocols, such as quantum computation and communication, rely heavily on maintaining superposition and entanglement. Understanding how long quantumness and quantum coherence can be maintained may help in the development of strategies to counteract this noise, including error correction and fault-tolerant quantum computing, which are essential for building reliable quantum technologies. 
We believe that speed limits on quantumness will have applications in developing quantum technologies such as quantum computing, quantum communication, quantum control, and quantum energetic and thermal devices. In future, our results can be applied to certain solid-state quantum systems, like superconducting qubits and quantum dots, making it highly relevant for practical implementations of quantum technologies.

\vskip .5cm
\noindent
{\it Acknowledgments:}
We are very thankful to Brij Mohan for making useful remarks and suggestions during the working out of the obtained bounds. D.S. and S.B. acknowledge the support of the INFOSYS scholarship during the initial phase of the work. D.S. acknowledges TCG CREST (CQuERE) for their support and hospitality during the two week visit.

\bibliography{name.bib}

\begin{widetext}
\appendix

\section{Proof of Corollary 2}\label{Appendix:proof_cor_2}
\begin{proof}

The skew-information at time $t$ can be expressed as

\begin{equation}
I(\rho,\cal{A}_t) = \frac{1}{2}\tr([\sqrt{\rho},\cal{A}_t]^{\dagger}[\sqrt{\rho},\cal{A}_t]).
\end{equation}

After differentiating the above equation with respect to time $t$, we obtain
\begin{equation}
     \frac{{\rm d} }{{\rm d}t}I(\rho,\cal{A}_t) = \frac{1}{2}\tr([\sqrt{\rho},\dot{\cal{A}}_t]^{\dagger}[\sqrt{\rho},\cal{A}_t])+\frac{1}{2}\tr([\sqrt{\rho},\cal{A}_t]^{\dagger}[\sqrt{\rho},\dot{\cal{A}_t}]).
\end{equation}

Let us now consider the absolute value of the above equation and apply triangular inequality $|A+B|\leq |A|+|B|$. We then obtain the following inequality

\begin{align}
     \left|\frac{{\rm d} }{{\rm d}t}I(\rho,\cal{A}_t)\right| \leq\hspace{0.2cm}& \frac{1}{2}\left|\tr([\sqrt{\rho},\dot{\cal{A}}_t]^{\dagger}[\sqrt{\rho},\cal{A}_t])\right|+\nonumber\\&\frac{1}{2}\left|\tr([\sqrt{\rho},\cal{A}_t]^{\dagger}[\sqrt{\rho},\dot{\cal{A}_t}])\right|.
\end{align}
Let us apply the Cauchy--Schwarz inequality on the right hand side of the above inequality, we get 
\begin{equation}
     \left|\frac{{\rm d} }{{\rm d}t}I(\rho,\cal{A}_t)\right| \leq \sqrt{2}\norm{[\sqrt{\rho},\cal{L}^{\dagger}({\cal{A}}_t)]}_{\rm HS}\sqrt{I(\rho,\cal{A}_t)}.
\end{equation}
From above inequality we obtain
\begin{equation}
     \left|\int_0^{T}\frac{{{\rm d} I(\rho,\cal{A}_t)} }{\sqrt{I(\rho,\cal{A}_t)}}\right| \leq \sqrt{2}\int_{0}^{T}\norm{[\sqrt{\rho},\cal{L}^{\dagger}({\cal{A}}_t)]}_{\rm HS}{\rm d}t.
\end{equation}

After integrating above inequality, we obtain the following bound 
\begin{equation}
T\geq T_{Q}= \frac{\sqrt{2}|\sqrt{I(\rho,\cal{A}_T)}-\sqrt{I(\rho,\cal{A}_0)}|}{\langle\!\langle\norm{[\sqrt{\rho},\cal{L}^{\dagger}({\cal{A}}_t)]}_{\rm HS}\rangle\!\rangle_T}.
\end{equation}

\end{proof}
While deriving the above theorem we have utilized Heisenberg picture of quantum mechanics where observable evolves in time and state remains fixed. It is important to note that for non-unitary dynamics $I(\rho_t,A)\neq I(\rho,A_t)$.

\section{Quantumness for unitary evolution}\label{Appendix:unitary}
Let $\mathcal{A}_{0}=\hat{n}\cdot\Vec{\sigma}$, be unitarily evolved through $H=\hat{m}\cdot\Vec{\sigma}$. With $U(t)=e^{-i H t}= \openone \cos t - i (\hat{m}\cdot\vec{\sigma}) \sin t$, we have
\begin{align}
   \mathcal{A}_{t} =&\hspace{0.2cm} U(t)\mathcal{A}_{0} U(t)^{\dagger}\nonumber\\
    =& \cos (2t) \hat{n}\cdot\Vec{\sigma} - \sin(2t)(\hat{m}\times\hat{n})\cdot\Vec{\sigma} \nonumber\\
    &+ 2(\cos\theta \sin^2 t) \hat{m}\cdot\vec{\sigma}
\end{align}
Where $\cos\theta = \hat{m}\cdot\hat{n}$. In terms of these variables the terms appearing in $T_{QSL}$ can be computed as follows:

\begin{align}
    \mathcal{L}(\mathcal{A}_{t}) =& \hspace{0.1cm}\dot{\mathcal{A}_{t}} = i[H,\mathcal{A}_{t}]\nonumber\\
    =& -2[\cos(2t)(\hat{m}\times \hat{n})\cdot \Vec{\sigma} - \sin(2t)\cos\theta \hat{m}\cdot\Vec{\sigma}  \nonumber\\
    &+ \sin(2t) \hat{n}\cdot\Vec{\sigma}]
\end{align}

\begin{align}
    Q(\mathcal{A}_{0},\mathcal{A}_{t}) &= 2\tr\left([\mathcal{A}_{0},\mathcal{A}_{t}]^{\dagger}[\mathcal{A}_{0},\mathcal{A}_{t}]\right) \nonumber\\
    &= 8\left(\sin^2 {2t}\sin^2{\theta} + \sin^4{t} \sin^2{2\theta} \right)
\end{align}

\begin{eqnarray}
    \left[\mathcal{A}_0, \mathcal{L}^{\dag}\left(\mathcal{A}_t\right)\right]= -4 i[(\cos 2 t ) \hat{m} \cdot \vec{\sigma}-(\cos 2 t \cos \theta) \hat{n} \cdot \vec{\sigma} 
- \sin 2 t \cos \theta(\hat{n} \times \hat{m}) \cdot \vec{\sigma}]
\end{eqnarray}

\begin{eqnarray}
    \left\|\left[\mathcal{A}_0, \mathcal{L}^{\dag}\left(\mathcal{A}_t\right)\right]\right\|_{H  S}&=&\sqrt{\operatorname{Tr}\left(\left[\mathcal{A}_0, \mathcal{L}^{\dag}\left(\mathcal{A}_t\right)\right]^{\dag}\left[\mathcal{A}_0,\mathcal{L}^{\dag}\left(\mathcal{A}_t\right)\right]\right)} \nonumber\\
    &=& 4 \sqrt{2\left(\cos ^2 2 t+\sin ^2 2 t \cos ^2 \theta \sin ^2 \theta\right) }
\end{eqnarray}

Plugging these terms into Eq~\eqref{QQSL}, we obtain
\begin{equation}
    T\geq T_{Q} = \frac{\sqrt{\left(\sin^2 {2T}\sin^2{\theta} + \sin^4{T} \sin^2{2\theta} \right)}}{\frac{2\sqrt{2}}{T}\int_{0}^{T}dt  \sqrt{\left(\cos ^2 2 t+\sin ^2 2 t \cos ^2 \theta \sin ^2 \theta\right) }}
\end{equation}

\section{Evolution of observables for General dephasing process}\label{Appendix:dephasing}

To solve the master equation, we vectorize the operators $\mathcal{A}$ as follows. 

\begin{eqnarray}
    \mathcal{A}= \sum_{i,j} a_{ij} \ketbra{i}{j} \rightarrow \ |\mathcal{A})= \sum_{i,j} a_{ij} |ji)
\end{eqnarray}

The vectorization of the product of three matrices $A,B,C$ are given by the relation $|ABC)=(C^{T}\otimes A) |B)$. Using this, Eq. \eqref{heismastereqn} can be written as:
\begin{eqnarray}\label{vecmaster}
    \frac{{\rm d}}{{\rm dt}}|\mathcal{A}_t)&=& \left[i(\openone\otimes H - H^T \otimes \openone)-\frac{\gamma}{2}(\openone\otimes\openone -\sigma^T_z \otimes \sigma_z)\right]|\mathcal{A}_t)\nonumber\\
    &\equiv& M_{\mathcal{L}} |\mathcal{A})
\end{eqnarray}

Since $M_{\mathcal{L}}$ is time independent, the solution to Eq. \eqref{vecmaster} is given by:

\begin{eqnarray}\label{At}
    |\mathcal{A}_t)= e^{t M_{\mathcal{L}}} |\mathcal{A}_0).
\end{eqnarray}

\end{widetext}

\end{document}